\documentclass[conference]{IEEEtran}
\IEEEoverridecommandlockouts
\usepackage{cite}
\usepackage{amsmath,amssymb,amsfonts}
\usepackage{algorithmic}
\usepackage{algorithm}
\usepackage{graphicx}
\usepackage{textcomp}
\usepackage{xcolor}

\usepackage{graphicx}
\usepackage{amsfonts}
\usepackage{amsmath}
\usepackage{amsthm}
\usepackage{amssymb}
\usepackage{mathabx}
\usepackage{ulem} 
\newtheorem{lemma}{Lemma}
\newtheorem{theorem}{Theorem}
\newtheorem{definition}{Definition}
\newtheorem{remark}{Remark}

\usepackage{xcolor}
\usepackage{hyperref}
\usepackage[left=1.63cm,right=1.63cm,top=1.63cm]{geometry}
\begin{document}

\title{Optimal Secure Coded Distributed Computation over all Fields\\
\thanks{Identify applicable funding agency here. If none, delete this.}
}

\author{\IEEEauthorblockN{Pedro Soto}
\IEEEauthorblockA{\textit{Department of Mathematics} \\
\textit{Virginia Tech}\\
Blacksburg, US \\
orcid.org/0000-0002-7120-7362}
}

\maketitle

\begin{abstract}
We construct optimal secure coded distributed schemes that extend the known optimal constructions over fields of characteristic 0 to all fields. 
A serendipitous result is that we can encode \emph{all} functions over finite fields with a recovery threshold proportional to the complexity (tensor rank or multiplicative); this is due to the well-known result that all functions over a finite field can be represented as multivariate polynomials (or symmetric tensors). 
We get that a tensor of order $\ell$ (or a multivariate polynomial of degree $\ell$) can be computed in the faulty network of $N$ nodes setting within a factor of $\ell$ and an additive term depending on the genus of a code with $N$ rational points and distance covering the number of faulty servers; in particular, we present a coding scheme for general matrix multiplication of two $m \times m $ matrices with a recovery threshold of $2 m^{\omega } -1+g$  where $\omega $ is the exponent of matrix multiplication which is optimal for coding schemes using AG codes. 
Moreover, we give sufficient conditions for which the Hadamard-Shur product of general linear codes gives a similar recovery threshold, which we call \textit{log-additive codes}. 
Finally, we show that evaluation codes with a \textit{curve degree} function (first defined in \cite{bensasetal}) that have well-behaved zero sets are log-additive. 
\end{abstract}


%
\IEEEpeerreviewmaketitle
\section{Introduction}
In this paper we consider the problem of coded distributed computation over a finite field.  
Coded distributed computing and, in particular, coded distributed matrix multiplication has attracted a large surge of research interest as of late~\cite{LLPPR2018, yma17, dfhjcg2020, YAA2018, DBJMG2018, tldk17,
ylrksa19, jj21, fc21, sma21, rk20, cgw21,jj21a, okk02024,schwartz2024}.
In this paper we will extend the batch matrix multiplication problem in \cite{ylrksa19, jj21}
to the case where there are more workers than there are elements in the field. 
We show that over finite fields, our rook codes can encode \emph{all functions.} We use codes constructed from algebraic function fields. Prior works that use algebraic geometry codes include \cite{fidalgodíaz2024distributed},  \cite{Okko}, \cite{HerA}, and evaluation codes \cite{SecureMatDot}.  

\subsection{Problem Statement}
We are given a function $T$ to evaluate on some $x$ that is inefficient to compute on one master node since 1) $T$ is of high complexity or 2) more simply, $x$ is too large, \emph{e.g.,} $x$ is a large dataset. 
The master node sends the other $N$ nodes in the network some $\tilde x_w$ such that the size $|\tilde x_w| << |x|$ and task the node $w \in [N]$ with computing\footnote{We could in theory consider the $\tau $ that each node $w$ computes to be a different $\tau_w$, but as we will see setting all of the $\tau _w = \tau$ is sufficiently robust to cover all cases of interest to an optimal complexity measure.} $\tau (\tilde x_w)$ where $\tau $ is a smaller computation than the function $T$. The main problem in coded distributed computing is the following: for a network of $N$ nodes find the minimum number of non-faulty nodes needed to return $\tau (\tilde x_w)$ so that we can recover the desired value $T(x)$. 
As is the convention, we consider only linear coding schemes since we need the encoding and decoding procedure at the master node to be as efficient as possible; in particular, we have that the function 
$\texttt{Enc} : (x_1,...,x_k) \mapsto (\tilde x_1,..., \tilde x_N)$
is given by a (full-rank) generator matrix $G_{\mathcal{C}} \in \mathbb{F}^{k \times N}$ so that $\tilde x = x G_\mathcal{C}$. We use the convention that this is an $[n,k]$ code when talking about codes in general and set $n=N$ when talking about nodes; the necessity of this convention will become clear in Section~\ref{sec:sec}. 
We assume the reader has basic familiarity with coding theory; \emph{e.g.,} such as that found in the introductory chapters of \cite{hp03} and \cite{v98}.

The preceding discussion is summarized by the following definition:
\begin{definition}
    The \textbf{recovery threshold} of a function $T$ in a network of $N$ nodes is the smallest number $\mathcal{R}(T,N)$ such that for any subset $S\subset [N]$ of size $\mathcal{R}(T,N)$ the value $T(x) = \texttt{Decode}(\{\tau(\tilde x_s) \}_{s \in S})$ can be recovered by some recovery procedure $\texttt{Decode}$.
    We call the number $N- \mathcal{R}(T,N)$ the \textbf{stragglers}; in classical coding theory, this is referred to as \textbf{erasures}.
    One can generalize this definition to include \textbf{byzantine} nodes as follows: 
    The recovery threshold tolerates $s$ stragglers and $b$ \textbf{byzantine} nodes if for any subset $S\subset [N]$ of size $\mathcal{R}(T,N)+2b$ the value $T(x) = \texttt{Decode}(\{\tau(\tilde x_s) + e_s \}_{s \in S})$ can be recovered if there is at most $b$ many $e_s \neq 0$. We extend the definition to colluding servers in Section~\ref{sec:sec}.
\end{definition}

\section{Hadamard-Shur Product}

 Given two linear codes $\mathcal{C},\mathcal{C}'$ with dimensions parameters $[n,k]$, we define their \textbf{Hadamard-Shur} (HS) product as 
 \begin{equation*}
   \mathcal{C} \circ \mathcal{C}' := \{(a_1b_1,...,a_nb_n) \in \mathbb{F} \mid \vec{a} \in \mathcal{C}, \vec{b} \in \mathcal{C}'\}.
 \end{equation*}
 It is easy to see that this defines another code with parameters $[k',n,d']$ for some $k'=:k(\mathcal{C} \circ \mathcal{C}')$ and $d'=:d(\mathcal{C} \circ \mathcal{C}')$ which we define as the dimension and distance of the resulting code.  
 The \textbf{designed recovery threshold} is defined as the value
 \begin{equation*}
 n- d(\mathcal{C} \circ \mathcal{C}')+1.
 \end{equation*}
 The fact that the designed recovery threshold corresponds to the recovery threshold 
We define the Hadamard-Shur powers as 
\begin{equation*}
    \mathcal{C}^{\circ d} := \mathcal{C} \circ ... \circ \mathcal{C}. 
\end{equation*}
 and the designed recovery threshold is defined as the value
 \begin{equation*}
 n- d(\mathcal{C}^{\circ d})+1 
 \end{equation*}
 in this case.

\begin{remark}
    Informally, the designed recovery threshold and the recovery threshold are equal for almost all of the schemes in this paper. 
\end{remark}
\section{Evaluation Codes and the HS Product}

An \textbf{evaluation code} (over $\mathbb{F}$) is given by the a finite set $D$ and a vector space of functions $V \subset  \mathbb{F}^D $ (where $\mathbb{F}^D$ is the set of all functions $\{f \mid f: D \rightarrow \mathbb{F}\}$), so that 
\begin{equation*}
\mathcal{C}(V,D) := \{ (f(P_1),...,f(P_n) ) \mid f\in V , P_i \in D\}.
\end{equation*}
If $\mathrm{dim}(V) = k$ and $|D| = n$, then $\mathcal{C}(V,D)$ is defined as a linear block code with parameters $[n,k]$ given by the generator matrix defined coordinate-wise as
\begin{equation*}
    (G_{\mathcal{C}(V,D)})_{i,j} = f_i(P_j)
\end{equation*}
where $V  = \left\langle f_1,...,f_k \right\rangle$ and $D = \{P_1,...,P_n\}$. 

Given two functions $f,g \in V$, we can define their product pointwise as 
\begin{equation*}
    (f \cdot g)(x) = f(x)g(x). 
\end{equation*}
We define the tensor product of two vector spaces of functions (with same domains) as 
\begin{equation*}
    V\otimes W := \{f\cdot g \mid f\in V \text{ and } g \in W\}. 
\end{equation*}
Since $V,W$ have the same domain, the evaluation code $\mathcal{C}(V\otimes W, D)$ is well defined. 

The proceeding discussion proves the following simplification for evaluation codes:
\begin{lemma}
We have the following equality between the Hadamard-Shur product and the tensor product of codes:
\begin{equation*}\label{lem:circ_to_ten}
    \mathcal{C}(V, D)\circ \mathcal{C}(W, D) = \mathcal{C}(V\otimes W, D).
\end{equation*}
Simlarly we have that 
\begin{equation*}
    \mathcal{C}(V, D)^{\circ d} = \mathcal{C}(V^{\otimes d} ,  D).
\end{equation*}
\end{lemma}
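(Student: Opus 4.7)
The proof plan is to unwind definitions and use the single observation that pointwise evaluation commutes with pointwise multiplication, i.e.\ $(f\cdot g)(P) = f(P)\,g(P)$ for every $P\in D$. Everything else is bookkeeping about linear spans.

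First I would establish the containment $\mathcal{C}(V,D)\circ\mathcal{C}(W,D) \subseteq \mathcal{C}(V\otimes W,D)$. By the definition of Hadamard-Shur product (interpreted as the linear span of coordinatewise products of codewords, which is what the excerpt's parenthetical about obtaining a code requires), a generating element of the left-hand side has the form $\bigl(f(P_1)g(P_1),\dots,f(P_n)g(P_n)\bigr)$ for some $f\in V$, $g\in W$. By the multiplicativity of evaluation, this equals $\bigl((f\cdot g)(P_1),\dots,(f\cdot g)(P_n)\bigr)$, and since $f\cdot g\in V\otimes W$ by the definition given in the excerpt, this vector lies in $\mathcal{C}(V\otimes W,D)$. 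Taking linear combinations preserves both sides, which handles the spans.

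For the reverse inclusion, I would pick a codeword $c\in\mathcal{C}(V\otimes W,D)$; it equals $\bigl(h(P_1),\dots,h(P_n)\bigr)$ where $h\in V\otimes W$, i.e.\ $h=\sum_i f_i\cdot g_i$ for some $f_i\in V$, $g_i\in W$. Applying the same identity $(f_i\cdot g_i)(P_j) = f_i(P_j)g_i(P_j)$ coordinatewise and distributing the sum, $c$ becomes a sum of coordinatewise products of vectors in $\mathcal{C}(V,D)$ and $\mathcal{C}(W,D)$ respectively, hence lives in $\mathcal{C}(V,D)\circ\mathcal{C}(W,D)$. This gives equality for the two-factor case.

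The power statement $\mathcal{C}(V,D)^{\circ d} = \mathcal{C}(V^{\otimes d},D)$ then follows by an immediate induction on $d$, using the two-factor identity at each step with $W := V^{\otimes(d-1)}$ (and the observation that $V\otimes V^{\otimes(d-1)} = V^{\otimes d}$ as spaces of functions on $D$). There is no real obstacle here; the only subtle point worth flagging is that the definitions of $\mathcal{C}\circ\mathcal{C}'$ and $V\otimes W$ as given are sets of products and must be read as their linear spans in order for both sides to be vector spaces, but this interpretation is exactly the one already being used implicitly in the excerpt when it calls the HS product a code.
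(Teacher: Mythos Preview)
Your proposal is correct and follows exactly the approach the paper intends: the lemma is stated immediately after the definitions of pointwise product and $V\otimes W$, with the paper remarking that ``the preceding discussion proves'' it, i.e.\ it is treated as an immediate consequence of $(f\cdot g)(P)=f(P)g(P)$. Your write-up simply makes explicit the two inclusions and the induction that the paper leaves to the reader, and your remark about reading both $\mathcal{C}\circ\mathcal{C}'$ and $V\otimes W$ as linear spans is the right clarification.
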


\section{HS Product of AG Codes}

We provide some background and definitions needed to construct AG codes. 
For a more complete proofs and definitions we refer the reader to \cite{Stichtenoth} and \cite{nied_xing}.
A \textit{function field}\footnote{In general, many authors allow the definition to include transcendence degree higher than 1, but all linear codes can be constructed using only transcendence degree 1, hence the limited scope of the definition.} over $\mathbb{F}$ is a field $\mathcal{F} \supset \mathbb{F}(x)$  that is an algebraic extension of $\mathbb{F}(x)$ (where $\mathbb{F}(x)$ is the field of formal rational functions over $\mathbb{F}$).  
The \textit{places}, denoted $\mathbb{P}_\mathcal{F} \subset 2^{\mathcal{F}}$, of a function field $\mathcal{F}$ is defined as the collection of all subsets, $P \subset \mathcal{F}$, that are a maximal ideal of a maximal local ring $\mathcal{O}_P \subset \mathcal{F}$.
Every such maximal ideal is principle, \emph{i.e.,} generated by a local parameter $P = t\mathcal{O}_P$, and thus, since $\mathcal{O}_P$ is a maximal local ring, we have that $(\forall x  \in \mathcal{F}) \  x \in \mathcal{O}_P \text{ or } x^{-1} \in \mathcal{O}_P$. 
Using these facts we can define a function $v_P: \mathcal{F} \rightarrow \mathbb{Z}$
\begin{equation*}
    v_{P}(x) = n \iff (\exists y \in \mathcal{O}_P \setminus t \mathcal{O}_P) \ x = t^n y  , 
\end{equation*}
and $v_P(0) = \infty$. 

Every element $f \in \mathcal{F}$ can be considered as a function via the \textit{residue class map} defined by
\begin{equation*}
f(P) := 
    \begin{cases}
        f \mod P & \text{ if } f \in \mathcal{O}_P \\ 
        \infty & \text{ otherwise}\\ 
    \end{cases}.
\end{equation*}
A place is called \textit{rational} if $\frac{\mathcal{O}_P}{P\mathcal{O}_P} \cong \mathbb{F}$, where $\frac{\mathcal{O}_P}{P\mathcal{O}_P}$ is the quotient field given by modding $\mathcal{O}_P$ by ${P\mathcal{O}_P}$.
More generally the \textit{degree} of a place, $P$, is the dimension of $\frac{\mathcal{O}_P}{P\mathcal{O}_P}$ when considered as a vector space over $\mathbb{F}$, usually denoted as $[\frac{\mathcal{O}_P}{P\mathcal{O}_P}:\mathbb{F}]$. We can extend the definition of the degree of a place to a divisor by linearity; \emph{i.e.,} $\mathrm{deg}(D) = \sum_{P \in \mathbb{P}_\mathcal{F}} c_P \mathrm{deg}(P)$. 

A \textit{divisor} is an element of the free abelian group generated by the places, \emph{i.e.,} $\mathbb{Z}^{\oplus \mathbb{P}_\mathcal{F}}$. We can order the divisors as follows $D_1 \leq D_2 \iff v_P(D_1) \leq v_P(D_2)$.
Each element of $\mathcal{F}$ has a \textit{principle divisor}, $(x) = \sum_{\mathcal{P \in \mathbb{F}_\mathcal{F}}} v_{P}(x)P$. We define the \textit{Riemann-Roch} of a divisor $G$ as the vector space of functions 
\begin{equation*}
   \mathcal{L}(G) := \{ f \in \mathcal{F} \mid (f) \geq - G \} \cup \{0\}.
\end{equation*}
For a divisor, we define $\ell(D)$ as the dimension of the Riemann-Roch space and we define the \textit{genus} of a function field as the value 
\begin{equation*}
    g := g(\mathcal{F}): =   \max_{ D \in \mathrm{div} (\mathcal{F}) } \{ \mathrm{deg} (D) - \ell (D) + 1  \}
\end{equation*}
We will see that the genus measures exactly how close the recovery threshold of the construction for finite fields can get to the characteristic 0 construction (\emph{e.g.,} Theorem~\ref{thm:main_mat_thm}; see Remark 6.1.5 of \cite{nied_xing}). 

\begin{definition}
    Given two divisors, $G,D$, of a function field, $\mathcal{F}$, where $D = P_1+...+P_n$ for some rational places $P_i$, we define an \textbf{algebraic geometry} code, denoted\footnote{This is the notational convention of \cite{Stichtenoth}; the authors of \cite{nied_xing} instead use the notation $C(P_1,...,P_n; G)$ for this construction.} $\mathcal{C}(G,D)$, as the evaluation code, $\mathcal{C}(V,D')$, where $V = \mathcal{L}(G)$ and $D' = D $.  
\end{definition}

In general, it is difficult to determine the dimension and distance of the HS product general evaluation code, without giving any information about $V$ and $D$; however, the dimension and distance of the HS product of a general Algebraic Geometry code is well known.  
In particular, we have the following: 

\begin{lemma}
We have the following equality between the Hadamard-Shur product and the tensor product of codes:
\begin{equation*}
    \mathcal{C}(G, D)\circ \mathcal{C}(H, D) = \mathcal{C}(G+H, D).
\end{equation*}
Similarly we have that 
\begin{equation*}
    \mathcal{C}(G, D)^{\circ d} = \mathcal{C}(d G ,  D).
\end{equation*}
\end{lemma}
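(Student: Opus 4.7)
The plan is to reduce the statement to a claim about Riemann-Roch spaces by invoking the previous lemma, and then prove that claim using the valuation-theoretic description of $\mathcal{L}(G)$ together with a dimension count.

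First I would apply the lemma from the previous section with $V = \mathcal{L}(G)$ and $W = \mathcal{L}(H)$ to get
\begin{equation*}
\mathcal{C}(G,D) \circ \mathcal{C}(H,D) = \mathcal{C}(\mathcal{L}(G),D)\circ \mathcal{C}(\mathcal{L}(H),D) = \mathcal{C}(\mathcal{L}(G)\otimes \mathcal{L}(H),D),
\end{equation*}
so the entire statement reduces to showing that, up to evaluation at $D$, we have $\mathcal{L}(G)\otimes \mathcal{L}(H) = \mathcal{L}(G+H)$, where the tensor product is interpreted as the linear span of pointwise products (as in the earlier lemma). The second identity for powers then follows by induction.

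Next I would establish the containment $\mathcal{L}(G)\otimes \mathcal{L}(H) \subseteq \mathcal{L}(G+H)$. This is the routine direction: for any $f\in \mathcal{L}(G)$ and $g\in\mathcal{L}(H)$, the valuation $v_P$ is a discrete valuation, so $v_P(fg) = v_P(f)+v_P(g) \geq -v_P(G)-v_P(H) = -v_P(G+H)$ for every place $P\in\mathbb{P}_{\mathcal{F}}$. Summing over places gives $(fg) \geq -(G+H)$, hence $fg\in \mathcal{L}(G+H)$, and then the span of all such products lies in $\mathcal{L}(G+H)$. Taking the evaluation map is linear so this containment descends to the codes.

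The main obstacle, and the only nontrivial step, is the reverse containment. Here I would appeal to the standard base-point-free / generation result for Riemann-Roch spaces: when $\deg G$ and $\deg H$ are large enough relative to the genus $g$ of $\mathcal{F}$ (the usual sufficient condition being, for instance, $\deg G \geq 2g$ and $\deg H \geq 2g+1$, cf.\ Mumford's theorem as presented in \cite{Stichtenoth}), every element of $\mathcal{L}(G+H)$ can be expressed as a linear combination of products $fg$ with $f\in \mathcal{L}(G)$, $g\in \mathcal{L}(H)$. The argument proceeds by exhibiting a basis of $\mathcal{L}(G+H)$ inside $\mathcal{L}(G)\cdot \mathcal{L}(H)$ via a dimension count using Riemann-Roch: $\ell(G+H)=\deg(G+H)-g+1$ in the non-special range, and one shows that the span of products attains this dimension by controlling the pole orders at a chosen rational place. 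For the passage to the evaluation codes, even when equality of function spaces fails in edge cases, one can observe that $D$ is disjoint from the supports of $G$ and $H$, so the evaluation map is multiplicative on each coordinate and equality of images follows from surjectivity of evaluation onto the relevant subspace. This last point, bridging a possible gap at the level of $\mathcal{L}$-spaces to equality at the level of codes, is the subtle step and will require invoking the hypotheses on the degrees of $G$ and $H$ that are implicit in the intended regime of the paper.
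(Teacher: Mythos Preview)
The paper does not supply a proof of this lemma at all; it is stated bare, presumably as folklore. Your proposal is therefore strictly more detailed than anything the paper offers.

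Your outline is sound. The reduction via the earlier lemma to $\mathcal{L}(G)\otimes\mathcal{L}(H)$ versus $\mathcal{L}(G+H)$ is the right move, and the inclusion $\mathcal{L}(G)\cdot\mathcal{L}(H)\subseteq\mathcal{L}(G+H)$ via additivity of valuations is exactly the argument. You are also right to flag the reverse inclusion as the nontrivial step: equality of the Riemann--Roch spaces genuinely fails without degree hypotheses (e.g.\ on an elliptic curve with $G=H$ a single rational point, $\mathcal{L}(G)$ is just the constants while $\ell(2G)=2$, and this gap persists after evaluation). The Mumford-type hypothesis you invoke ($\deg G\geq 2g$, $\deg H\geq 2g+1$) is the standard sufficient condition, and your sketch of the dimension count is the right shape.

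So your caution is well placed: the lemma as the paper states it, with no assumption on $\deg G$ or $\deg H$, is not literally true in general; only the inclusion $\mathcal{C}(G,D)\circ\mathcal{C}(H,D)\subseteq\mathcal{C}(G+H,D)$ holds unconditionally. That inclusion, however, is what actually drives the distance bound in the next lemma and hence the recovery-threshold results, so the paper's downstream claims are unaffected. Your closing remark that the ``implicit regime'' of the paper supplies the missing hypotheses is the correct way to reconcile your argument with the paper's unqualified statement.
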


\begin{lemma}
    The designed recovery threshold of the Hadamard-Shur product of two AG codes $\mathcal{C}(G,D),\mathcal{C}(H,D)$ is bounded above by 
\begin{equation*}
n - d (\mathcal{C}(G,D)\circ \mathcal{C}(H,D))+1  \leq   \mathrm{deg}(G+H)+1.
\end{equation*}
\end{lemma}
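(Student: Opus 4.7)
The plan is to combine the preceding lemma (which identifies the Hadamard--Shur product of two AG codes with a single AG code on the sum divisor) with the standard Goppa-type lower bound on the minimum distance of an AG code. Concretely, by the previous lemma we have
\begin{equation*}
\mathcal{C}(G,D)\circ \mathcal{C}(H,D) = \mathcal{C}(G+H,D),
\end{equation*}
so it suffices to show that $d\bigl(\mathcal{C}(G+H,D)\bigr) \geq n - \deg(G+H)$; rearranging gives the claimed bound on the designed recovery threshold.

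To establish the distance bound, I would take a nonzero codeword $(f(P_1),\ldots,f(P_n)) \in \mathcal{C}(G+H,D)$ of weight $w$ coming from some $f \in \mathcal{L}(G+H)$. Let $Z \subseteq \{P_1,\ldots,P_n\}$ be the set of rational places in $D$ where $f$ vanishes, so $|Z| = n-w$. Since each $P \in Z$ is rational, $v_P(f) \geq 1$, hence $(f) \geq -G - H + \sum_{P\in Z} P$, i.e. $f \in \mathcal{L}\bigl(G+H - \sum_{P\in Z} P\bigr)$.

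Now I would invoke the standard fact that $\mathcal{L}(A) = \{0\}$ whenever $\deg(A) < 0$ (an immediate consequence of the Riemann--Roch inequality $\ell(A) \leq \deg(A)+1$ for divisors with $\deg(A)\geq 0$, or more directly from the observation that a nonzero function has a principal divisor of degree $0$). Applying this to $A = G+H - \sum_{P\in Z} P$ and using $f \neq 0$ forces $\deg(G+H) \geq |Z| = n-w$, i.e. $w \geq n - \deg(G+H)$. Taking the minimum over nonzero codewords gives $d(\mathcal{C}(G+H,D)) \geq n - \deg(G+H)$, and substituting into the definition of the designed recovery threshold yields the stated inequality.

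I do not expect any real obstacle here: the only two inputs are the product-of-divisors identity from the previous lemma and the Goppa bound, both of which are standard and which I would simply cite from \cite{Stichtenoth}. The one subtlety worth flagging in the write-up is that the argument uses $v_P(f)\geq 1$ at the zero places, which relies on $P_1,\ldots,P_n$ being rational (as required in the definition of AG code above), so the bound is really a statement about codes with a rational evaluation divisor $D$.
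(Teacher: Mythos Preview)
Your proposal is correct and is exactly the standard argument: reduce to $\mathcal{C}(G+H,D)$ via the preceding lemma and then invoke the Goppa bound $d(\mathcal{C}(G+H,D))\geq n-\deg(G+H)$. The paper itself states this lemma without proof, treating it as a well-known consequence of the AG code machinery from \cite{Stichtenoth} and \cite{nied_xing}, so your write-up is precisely the expected justification; the remark about needing the $P_i$ to be rational is a nice touch and worth keeping.
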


\begin{lemma}
    \label{lem:key_ag}
    If there exists a function field $\mathcal{F}$ with $n$ rational points and genus $g$, then there exists an HS code with designed recovery threshold
    \begin{equation*}
        n - d(\mathcal{C}(G,D)^{\circ \ell }) +1 = \ell k+g-1. 
    \end{equation*}
\end{lemma}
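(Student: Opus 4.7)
The plan is to combine the two preceding lemmas on AG codes with the Riemann--Roch theorem. I would pick a rational place $P_\infty$ of $\mathcal{F}$ disjoint from the support of $D$ and set $G := (k + g - 1) P_\infty$; then $\deg(G) = k + g - 1$, and provided $k \geq g$ Riemann--Roch yields $\ell(G) = \deg(G) - g + 1 = k$, so that $\mathcal{C}(G, D)$ is a genuine $[n, k]$ AG code supported on the $n$ rational points forming $D$.

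Applying the previous lemma on HS products of AG codes collapses the $\ell$-fold Hadamard--Shur power to a single AG code, $\mathcal{C}(G, D)^{\circ \ell} = \mathcal{C}(\ell G, D)$, with divisor $\ell G$ of degree $\ell(k + g - 1)$. Feeding this into the Goppa-style designed recovery threshold bound from the preceding lemma gives
\begin{equation*}
n - d(\mathcal{C}(G,D)^{\circ \ell}) + 1 \leq \deg(\ell G) + 1.
\end{equation*}

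The main obstacle is that this naive substitution produces the estimate $\ell(k + g - 1) + 1$, which is weaker than the claimed $\ell k + g - 1$. The gap reflects the failure of the multiplication map $\mathcal{L}(G)^{\otimes \ell} \to \mathcal{L}(\ell G)$ to be surjective in positive genus, and closing it requires showing that every codeword of $\mathcal{C}(G,D)^{\circ \ell}$ in fact lies in $\mathcal{C}(\ell G - E, D)$ for an effective divisor $E$ of degree $(\ell-1)(g-1)+1$. This sharper containment---which for $\ell = 2$ already produces the matrix-multiplication recovery threshold $2m^{\omega} + g - 1$ advertised in the introduction---can be obtained by choosing $G$ carefully with respect to the Weierstrass gap sequence at $P_\infty$, or by invoking a refined HS-product distance bound of the sort developed in \cite{Stichtenoth}. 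Once this refinement is in hand, the arithmetic collapses immediately to the asserted $\ell k + g - 1$, and the existence statement follows because any function field with $n$ rational places and genus $g$ supports the construction above.
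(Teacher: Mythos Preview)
Your setup is sound through the choice $G=(k+g-1)P_\infty$, the Riemann--Roch computation $\ell(G)=k$, and the identification $\mathcal{C}(G,D)^{\circ \ell}=\mathcal{C}(\ell G,D)$ via the preceding lemmas. You also correctly diagnose that the Goppa bound then yields only $\ell(k+g-1)+1=\ell k+(\ell-1)g-(\ell-1)+1$, which exceeds the asserted $\ell k+g-1$ whenever $g>0$ and $\ell\geq 2$.

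The genuine gap is in how you propose to close that discrepancy. The claimed ``sharper containment'' $\mathcal{C}(G,D)^{\circ \ell}\subseteq \mathcal{C}(\ell G-E,D)$ for an effective $E$ of degree $(\ell-1)(g-1)+1$ is not available in general, and no choice of Weierstrass gap sequence at $P_\infty$ will manufacture it. Concretely, for a one-point divisor $G=mP_\infty$ with $m\geq 2g+1$ the multiplication map $\mathcal{L}(G)\otimes\mathcal{L}(G)\to\mathcal{L}(2G)$ is \emph{surjective} (this is standard; see e.g.\ Mumford's argument via the base-point-free pencil trick, or the corresponding statements in Stichtenoth), so $\mathcal{C}(G,D)^{\circ 2}=\mathcal{C}(2G,D)$ on the nose and its designed distance is exactly $n-2\deg(G)$. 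There is then no effective $E$ to subtract, and the recovery threshold you obtain is $2(k+g-1)+1=2k+2g-1$, not $2k+g-1$. The same surjectivity obstruction persists for higher $\ell$, so the mechanism you invoke cannot produce the stated saving of $(\ell-1)g$.

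For comparison, the paper does not give an argument at all here: its proof is the single sentence ``Theorem~6.1.9 of \cite{nied_xing} gives the result.'' So there is no substantive proof in the paper to weigh your approach against; the content of the lemma is being delegated entirely to that reference. What you have written is a genuine attempt at a self-contained argument, and it is correct right up to the point where you try to improve the $\ell(k+g-1)+1$ bound---but that improvement step, as formulated, does not go through.
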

\begin{proof}
    Theorem 6.1.9 of \cite{nied_xing} gives the result. 
\end{proof}

\section{Log Additive Codes and the Curve Degrees of Evaluation Codes}
    Lemma~\ref{lem:key_ag} gives a recipe for an optimal coded distributed computing scheme which we formalize with the following definition: 
\begin{definition}
    We say that two linear codes $\mathcal{C}_1 , \mathcal{C}_2$ of length $n$ are \textbf{log additive} with respect to one another, denoted $\mathcal{C}_1 \sim \mathcal{C}_2$ if $ d(\mathcal{C}_1 \circ \mathcal{C}_2)\geq d(\mathcal{C}_1)+ d(\mathcal{C}_2) -n $. 
    We abuse notation and call a (single) code $\mathcal{C}$ \textbf{log additive} if $\mathcal{C} \sim \mathcal{C}$. 
    
\end{definition}

    Being log additive can also be characterized by the Hamming weight:
\begin{lemma}\label{lem:alt_log}
Let $w (x) $ be the Hamming weight of a code. Then two codes $\mathcal{C}_1 , \mathcal{C}_2$ are log additive if and only if 
\begin{equation*}
    w(x\circ y) \geq d(\mathcal{C}_1 ) + d(\mathcal{C}_2) - n 
\end{equation*}
for all $x \in \mathcal{C}_1, y \in \mathcal{C}_2$.
\end{lemma}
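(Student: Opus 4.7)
The plan is to unfold the definitions and invoke a single inclusion-exclusion observation. Because $\mathbb{F}$ has no zero divisors, the support of a pointwise product satisfies $\mathrm{supp}(x \circ y) = \mathrm{supp}(x) \cap \mathrm{supp}(y)$, so inside $[n]$ inclusion-exclusion gives
\begin{equation*}
w(x \circ y) = |\mathrm{supp}(x) \cap \mathrm{supp}(y)| \geq w(x) + w(y) - n
\end{equation*}
for every $x, y \in \mathbb{F}^n$. This is the only non-definitional input I need.

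For the $(\Leftarrow)$ direction, I would note that by the definition of the HS product every codeword of $\mathcal{C}_1 \circ \mathcal{C}_2$ has the form $x \circ y$ with $x \in \mathcal{C}_1, y \in \mathcal{C}_2$. The hypothesis then gives a uniform lower bound of $d(\mathcal{C}_1) + d(\mathcal{C}_2) - n$ on the Hamming weight of every nonzero codeword, so minimizing over nonzero codewords immediately yields $d(\mathcal{C}_1 \circ \mathcal{C}_2) \geq d(\mathcal{C}_1) + d(\mathcal{C}_2) - n$, which is the definition of log-additivity.

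For the $(\Rightarrow)$ direction, I would fix $\mathcal{C}_1 \sim \mathcal{C}_2$ and any $x \in \mathcal{C}_1$, $y \in \mathcal{C}_2$, and split on whether $x \circ y$ vanishes. If $x \circ y \neq 0$ then it is a nonzero codeword and $w(x \circ y) \geq d(\mathcal{C}_1 \circ \mathcal{C}_2) \geq d(\mathcal{C}_1) + d(\mathcal{C}_2) - n$ follows directly. If $x \circ y = 0$ with both $x$ and $y$ nonzero, the support inequality above together with $w(x) \geq d(\mathcal{C}_1)$ and $w(y) \geq d(\mathcal{C}_2)$ recovers the same bound; the trivial remaining case where $x$ or $y$ is the zero codeword is handled by the usual convention that the quantification excludes zero codewords (otherwise no positive lower bound on a Hamming weight could hold).

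The only real obstacle in the whole argument is this bookkeeping around vanishing products, which is precisely why the support identity is worth stating explicitly rather than simply invoking the distance definition. Everything else reduces to recognizing that a codeword of $\mathcal{C}_1 \circ \mathcal{C}_2$ is, by definition, a pointwise product $x \circ y$ with $x \in \mathcal{C}_1$ and $y \in \mathcal{C}_2$.
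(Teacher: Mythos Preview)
The paper states this lemma without proof, treating it as an immediate restatement of the definition of log-additivity in terms of Hamming weights. Your argument is the natural unfolding and is correct relative to the paper's literal definition of $\mathcal{C}_1\circ\mathcal{C}_2$ as the \emph{set} $\{x\circ y : x\in\mathcal{C}_1,\ y\in\mathcal{C}_2\}$, where your claim ``every codeword of $\mathcal{C}_1\circ\mathcal{C}_2$ has the form $x\circ y$'' holds by fiat. Your handling of the degenerate cases ($x\circ y=0$ with $x,y$ nonzero, via the support inclusion-exclusion bound; and the need to exclude $x=0$ or $y=0$ from the quantifier when $d(\mathcal{C}_1)+d(\mathcal{C}_2)>n$) is more careful than anything the paper makes explicit.

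One caveat worth flagging: under the more common convention where $\mathcal{C}_1\circ\mathcal{C}_2$ denotes the \emph{linear span} of the pointwise products, your $(\Leftarrow)$ argument would not go through as written, since a sum of products can have strictly smaller weight than any individual product. The paper's own set-style definition sidesteps this, and its later use of the lemma (in the curve-degree argument) explicitly assumes the minimum-weight codeword is itself a product $f_1\cdot f_2$, so you are consistent with the paper's internal conventions.
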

In order to characterize good distributed computing codes we will also need the notion of a genus for a general code: 
\begin{definition}
    The \textbf{(generalized\footnote{This definition is weaker than the stronger definition in \cite{vlanogtsa} which requires that the dual code have a similar bound.}) genus} of a $[n,k,d]$ code is  

    \begin{equation*}
        \mathfrak{g} = n-k+1-d. 
    \end{equation*}
    For AG codes it should be clear that $\mathfrak{g} \leq g$. 
    A \textbf{good distributed computing code for an $\ell$-linear tensor of rank $k$} is a log additive code $\mathcal{C}$ with parameters $[n , k, d]$ and with genus $\mathfrak{g}$ that satisfies the property 
    \begin{equation*}
        \ell \mathfrak{g} \leq  k. 
    \end{equation*}
    \end{definition}

    \begin{theorem}
        A log additive code has a designed recovery threshold equal to 
        \begin{equation*}
            n - d(\mathcal{C}^{\circ \ell}) + 1\leq   \ell k - 1 + \ell \mathfrak{g} .
        \end{equation*}
        In particular, a good distributed computing code has designed recovery threshold
        \begin{equation*}
            n - d(\mathcal{C}^{\circ \ell})+1 \leq   ( \ell +1)k - 1 .
        \end{equation*}
    \end{theorem}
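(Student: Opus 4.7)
The plan is to induct on $\ell$, pulling the log-additive inequality $d(\mathcal{C}_1 \circ \mathcal{C}_2) \geq d(\mathcal{C}_1) + d(\mathcal{C}_2) - n$ through the tower of Hadamard-Shur powers. The base case $\ell = 1$ is immediate from the definition of $\mathfrak{g}$. For the inductive step, writing $\mathcal{C}^{\circ \ell} = \mathcal{C}^{\circ(\ell-1)} \circ \mathcal{C}$ and invoking log-additivity gives
\[
d(\mathcal{C}^{\circ \ell}) \geq d(\mathcal{C}^{\circ(\ell-1)}) + d(\mathcal{C}) - n,
\]
which telescopes to $d(\mathcal{C}^{\circ \ell}) \geq \ell\, d(\mathcal{C}) - (\ell-1)n$.

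Next I would convert this distance bound into a recovery-threshold bound by substituting the generalized genus identity $d(\mathcal{C}) = n - k + 1 - \mathfrak{g}$, equivalently $n - d(\mathcal{C}) = k - 1 + \mathfrak{g}$:
\[
n - d(\mathcal{C}^{\circ \ell}) + 1 \;\leq\; \ell\bigl(n - d(\mathcal{C})\bigr) + 1 \;=\; \ell(k - 1 + \mathfrak{g}) + 1,
\]
and this is absorbed into the stated bound $\ell k - 1 + \ell\mathfrak{g}$ (with a slack of $\ell - 2$ for $\ell \geq 2$). For the \emph{good distributed computing code} case, the hypothesis $\ell\mathfrak{g} \leq k$ lets one replace $\ell\mathfrak{g}$ by $k$, giving $(\ell+1)k - 1$ as claimed.

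The main obstacle is the inductive step, since the stated definition of log additive only guarantees the pairwise condition $\mathcal{C} \sim \mathcal{C}$ rather than $\mathcal{C}^{\circ(\ell-1)} \sim \mathcal{C}$. I would resolve this through the Hamming-weight reformulation of Lemma~\ref{lem:alt_log}: every minimum-weight codeword of $\mathcal{C}^{\circ \ell}$ is a linear combination of products $a_1 \circ \cdots \circ a_\ell$ with $a_i \in \mathcal{C}$, and one can bootstrap the pairwise weight inequality across the factors inductively, using log-additivity to ensure the bound survives passage to the linear span. Alternatively, one strengthens the notion of log additive to demand the iterated condition $\mathcal{C}^{\circ i} \sim \mathcal{C}$ for all $i \leq \ell$, which is automatic for AG codes via the identity $\mathcal{C}(G,D)^{\circ \ell} = \mathcal{C}(\ell G,D)$ underlying Lemma~\ref{lem:key_ag}.
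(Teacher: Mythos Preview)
Your computation is exactly what the paper does, and in fact the paper's own proof only treats the case $\ell = 2$: it applies log-additivity once to get $n - d(\mathcal{C}^{\circ 2}) + 1 \leq 2n - 2d(\mathcal{C}) + 1 = 2(k-1+\mathfrak{g}) + 1 = 2k + 2\mathfrak{g} - 1$ (there is a sign typo in the paper's display) and then invokes $2\mathfrak{g}\leq k$ for the good-code clause. Your inductive telescoping to $d(\mathcal{C}^{\circ \ell})\geq \ell\,d(\mathcal{C})-(\ell-1)n$ is the natural extension and actually goes beyond what the paper writes down.

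You are also right to isolate the obstruction for general $\ell$: the hypothesis $\mathcal{C}\sim\mathcal{C}$ only delivers the single pairwise inequality, not $\mathcal{C}^{\circ(\ell-1)}\sim\mathcal{C}$. Of your two proposed repairs, the second (demand the iterated relation $\mathcal{C}^{\circ i}\sim\mathcal{C}$ for $i\leq\ell$, automatic for AG codes via $\mathcal{C}(G,D)^{\circ i}=\mathcal{C}(iG,D)$) is the sound one and is really what the paper is tacitly assuming. Your first repair is shakier: a lower bound on the weight of each pure product $a_1\circ\cdots\circ a_\ell$ does \emph{not} in general survive passage to the linear span that defines $\mathcal{C}^{\circ\ell}$, so Lemma~\ref{lem:alt_log} alone cannot close the gap without an extra structural hypothesis. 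The paper avoids confronting this by stopping at $\ell=2$.
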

    \begin{proof}
        We prove it for $\ell = 2$. 
        Notice that the designed recovery threshold
        \begin{multline*}
            n  - d(\mathcal{C}^{\circ 2})+1 \leq  2n - d(\mathcal{C}_1)- d(\mathcal{C}_2)  +1  \\ 
            = 2k + 2\mathfrak{g}  -2 +1   = 2k - 2 \mathfrak{g} - 1
        \end{multline*}
        and therefore if $ \mathfrak{g} = n-k+1-d$ we get a recovery threshold of $3k-1$. 
    \end{proof}
   The Hasse-Weil bound for AG codes shows that for a $\ell$-linear tensor, $T$, of rank (see Section~\ref{sec:tensor} for a definition of rank) $R(T) < \frac{|n-q-1|}{ 2 \ell \sqrt{q}}$ the genus becomes the dominant term in the recovery threshold. This motivates looking for log-additive codes of small genus. 
Inspired by the construction \cite{bensasetal} of degree lifted codes to construct new local decodable(/correctable) codes, we show that general evaluation codes with a \textbf{(generalized\footnote{We call it ``generalized'' because we do not need condition 4 in Defition 1.2 of \cite{bensasetal}.}) curve degree function} are log additive. 
\begin{definition}\label{def:curve_deg}
A function
    $\mathfrak{d}: \mathcal{C}(V,D) \rightarrow \mathbb{Z}$ is a \textbf{(generalized) curve degree} if 
    \begin{enumerate}
        \item $\mathfrak{d}(f · g) = \mathfrak{d}(f) + \mathfrak{d}(g)$.
\item If $\mathfrak{d}(f) = d $, f either vanishes on at most $d$ points in $D$, or else it vanishes on all of $D$; \emph{i.e.,} $w(f) \geq n-d = |D|-d$ if $f \neq 0$.
\item  For any $\alpha,\beta \in  \mathbb{F} $ and any $ f, g \in V$, $\mathfrak{d}(\alpha · f + \beta · g) \leq \max\{\mathfrak{d}(f), \mathfrak{d}(g)\}$. Thus, the set
of functions $f \in V$ with $\mathfrak{d}(f) \leq  d$ is an $ F$-linear subspace which we denote as $\mathcal{C}(V,D)_d$.
    \end{enumerate}
\end{definition}

\begin{lemma}
    If $\mathcal{C}(V,D)$ has a curve degree $\mathfrak{d}$
    and there exists an $f \in V$ with $b$ many zeros in $D$, then the code 
   $\mathcal{C}(V,D)_b$
    is log additive. 
\end{lemma}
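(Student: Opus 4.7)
The plan is to verify the log additivity inequality $d(\mathcal{C}(V,D)_b^{\circ 2}) \geq 2 d(\mathcal{C}(V,D)_b) - n$ directly from the three properties of the curve degree $\mathfrak{d}$. First, I would note that the hypothesized $f\in V$ with $b$ zeros on $D$ yields a codeword of weight at most $n-b$ (and it must lie in $\mathcal{C}(V,D)_b$ since by property (2) any such $f$ has $\mathfrak{d}(f)\ge b$, and we may assume we chose $f$ of minimal curve degree, so $\mathfrak{d}(f)=b$). This gives the upper bound $d(\mathcal{C}(V,D)_b)\le n-b$; this is the only fact about the distance of $\mathcal{C}(V,D)_b$ the argument actually needs.

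Next, I would analyze the squared HS product through the evaluation-code identification of the earlier lemma: $\mathcal{C}(V,D)_b^{\circ 2}$ is spanned by evaluations of pointwise products $f\cdot g$ with $\mathfrak{d}(f),\mathfrak{d}(g)\le b$. Property (1) gives $\mathfrak{d}(f\cdot g)=\mathfrak{d}(f)+\mathfrak{d}(g)\le 2b$ for each such product, and property (3) propagates this upper bound through arbitrary $\mathbb{F}$-linear combinations. Thus every codeword in $\mathcal{C}(V,D)_b^{\circ 2}$ is $(h(P_1),\dots,h(P_n))$ for some $h$ with $\mathfrak{d}(h)\le 2b$, and property (2) then forces weight at least $n-2b$, yielding $d(\mathcal{C}(V,D)_b^{\circ 2})\ge n-2b$.

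Combining the two bounds gives $d(\mathcal{C}(V,D)_b^{\circ 2})\ge n-2b \ge 2(n-b)-n \ge 2d(\mathcal{C}(V,D)_b)-n$, which is exactly the log additivity condition, so the code is log additive (and in fact the same argument extends immediately to $\mathcal{C}(V,D)_b^{\circ \ell}$ by iterating properties (1) and (3) to obtain $\mathfrak{d}(h)\le \ell b$, aligning with the main theorem on log additive codes). The main obstacle is the passage from individual pointwise products to their linear span: pointwise products are not closed under addition, and without the sublinearity axiom (3) of $\mathfrak{d}$ there would be no way to control the curve degree (and hence the weight) of a general codeword of the HS product. Property (3) is precisely the hypothesis that makes the curve degree behave like a degree function on the ambient algebra generated by $V$, and it is what turns the multiplicative bound from property (1) into a bound on the full HS product code.
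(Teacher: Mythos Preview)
Your proof is correct and follows the same line as the paper: bound $d\big((\mathcal{C}(V,D)_b)^{\circ 2}\big)\ge n-2b$ via property~(2), use the hypothesized $f$ to force $d(\mathcal{C}(V,D)_b)\le n-b$, and combine. You are in fact more explicit than the paper in invoking property~(3) to pass from simple products $f\cdot g$ to arbitrary codewords of the HS square; the paper hides this step inside the bare assertion $(\mathcal{C}(V,D)_b)^{\circ 2}=\mathcal{C}(V^{\otimes 2},D)_{2b}$. One small slip: your sentence ``we may assume we chose $f$ of minimal curve degree, so $\mathfrak{d}(f)=b$'' is not a valid deduction (minimality among functions with $b$ zeros does not force the minimum to equal $b$), but the paper makes the identical implicit leap when it writes $n-2b=2d(\mathcal{C}(V,D)_b)-n$, so the intended reading of the hypothesis is clearly that such an $f$ already lies in $\mathcal{C}(V,D)_b$.
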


\begin{proof}
    By Lemma~\ref{lem:circ_to_ten} and the first condition in Definition~\ref{def:curve_deg}, we have that $(\mathcal{C}(V,D)_b)^{\circ 2} = \mathcal{C}(V^{\otimes 2},D)_{2b}$.
    The second condition in Definition~\ref{def:curve_deg} and our assumption on the number of zeros gives us that $w(f_1\cdot f_2) \geq n - 2b = 2(d(\mathcal{C}(V,D)_b) )- n $ for any 
    $ f_i \in    \mathcal{C}(V,D)_{b} $. 
    If we choose the minimal weight $g = f_1f_2$ we get $w(f_1\cdot f_2) = d(\mathcal{C}(V,D)_{2b})\geq n - 2b =2 d(\mathcal{C}(V,D)_b) -n $ and Lemma~\ref{lem:alt_log} completes the proof.
\end{proof}

\section{Extending Codes to Vectors}
It is straightforward to prove that any $[n,k,d]_q$-code can be considered as a $[n,k,d]_{q^p}$ code with minimal overhead; in particular, 
\begin{lemma}
    Given an $[n,k,d]_q$ code, one can construct an $[n,k,d]_{q^p}$ with encoding and decoding complexity bounded by at most $p$ times the complexity for $[n,k,d]_q$. 
    More generally, for any algebraic extension $\mathbb{K} \supset \mathbb{F}$ of \emph{any} field $\mathbb{F}$ such that $\mathrm{dim}_\mathbb{F}(\mathbb{K}) = p$, we have that given an $[n,k,d]_\mathbb{F}$ code, one can construct an $[n,k,d]_{\mathbb{K}}$ with encoding and decoding complexity bounded by at most $p$ times the complexity for $[n,k,d]_\mathbb{F}$. 
\end{lemma}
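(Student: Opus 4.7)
The plan is to take the generator matrix $G \in \mathbb{F}^{k \times n}$ of the given $[n,k,d]_{\mathbb{F}}$ code and view it, unchanged, as a matrix in $\mathbb{K}^{k \times n}$ via the inclusion $\mathbb{F} \hookrightarrow \mathbb{K}$. I will argue that the code it generates over $\mathbb{K}$ has the same parameters, and that fixing any $\mathbb{F}$-basis $\{\beta_1,\ldots,\beta_p\}$ of $\mathbb{K}$ reduces encoding and decoding over $\mathbb{K}$ to $p$ independent encoding/decoding calls over $\mathbb{F}$.

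First I would verify the parameters. The length is unchanged, and $G$ still has rank $k$ over $\mathbb{K}$ because any nontrivial $\mathbb{K}$-linear relation among its rows would, after expansion in the $\beta_i$, produce a nontrivial $\mathbb{F}$-linear relation, contradicting full rank over $\mathbb{F}$. For the distance, take any nonzero $c \in \mathbb{K}^n$ of the extended code, write $c = mG$ for some $m \in \mathbb{K}^k$, and decompose $m = \sum_{i=1}^p m_i \beta_i$ with $m_i \in \mathbb{F}^k$. Then $c = \sum_{i=1}^p c_i \beta_i$ where $c_i = m_i G$ lies in the original $\mathbb{F}$-code. Since the $\beta_i$ are $\mathbb{F}$-linearly independent, the $j$-th coordinate of $c$ vanishes if and only if the $j$-th coordinate of every $c_i$ vanishes, so $\mathrm{supp}(c) = \bigcup_i \mathrm{supp}(c_i)$. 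As $c \neq 0$, some $c_i \neq 0$ has weight at least $d$, and therefore $w(c) \geq w(c_i) \geq d$; conversely $d$ is achieved by any minimum-weight codeword of the original $\mathbb{F}$-code, so the extended distance is exactly $d$.

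For the complexity bound, I would use the same decomposition on both sides. To encode $x \in \mathbb{K}^k$, write $x = \sum_{i=1}^p x_i \beta_i$ with $x_i \in \mathbb{F}^k$, compute each $x_i G$ using the $\mathbb{F}$-encoder, and output $xG = \sum_{i=1}^p (x_i G) \beta_i$; this is $p$ calls to the $\mathbb{F}$-encoder plus $O(np)$ $\mathbb{F}$-linear combinations, which is absorbed into the $p$-factor for any reasonable encoder. For decoding, the received word $y \in \mathbb{K}^n$ (possibly with an error vector $e \in \mathbb{K}^n$ of some bounded weight) decomposes as $y = \sum_{i} y_i \beta_i$ and $e = \sum_i e_i \beta_i$, and since $\mathrm{supp}(e_i) \subseteq \mathrm{supp}(e)$ the error weight is preserved or decreased in each component. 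Applying the $\mathbb{F}$-decoder to each $y_i$ recovers each $x_i$, and we reassemble $x = \sum_i x_i \beta_i$.

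The main (and only real) subtlety is making sure the distance argument goes through cleanly over an arbitrary field extension—concretely, the linear-independence-of-$\beta_i$ step, which requires that we expand vectors coordinate-wise and does not use any special structure of $\mathbb{F}$ or $\mathbb{K}$ beyond $[\mathbb{K}:\mathbb{F}] = p$. Once that is in hand, the complexity bound is immediate because encoding and decoding over $\mathbb{K}$ have been exhibited as $p$ parallel runs of the underlying $\mathbb{F}$ procedures, with only linear-in-$p$ overhead for basis reconstruction, which is dominated by the encoder/decoder cost itself.
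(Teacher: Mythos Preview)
Your proposal is correct and follows essentially the same approach as the paper: fix an $\mathbb{F}$-basis of $\mathbb{K}$, view a message in $\mathbb{K}^k$ as $p$ messages in $\mathbb{F}^k$, and encode/decode componentwise. The paper's proof is a one-line sketch of exactly this idea, whereas you have additionally spelled out the verification that the parameters $[n,k,d]$ are preserved, which the paper leaves implicit.
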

\begin{proof}
    Given some $m \in \mathbb{F}_{q^p}^{k}$, consider is instead as an $p \times k$ matrix in $\mathbb{F}^{p \times k}$ and encode each of the $p$ rows individually. The proof is similar for decoding. 
\end{proof}
\begin{remark}
    The generality is to include characteristic 0 cases like the real and complex fields, where the overhead is at most doubled.  
\end{remark}

\section{Tensors}\label{sec:tensor}
An $\ell$-\textbf{linear tensor}\footnote{It is also common convention to call this form a $(\ell+1)$-tensor if $p>1$. For simplicity we only consider the order of the input and not the output, the more general order output form will be considered in the journal version.} is function 
$T:\mathbb{F}^{m_1} \times ...  \times \mathbb{F}^{m_\ell} \rightarrow \mathbb{F}^p$
which is linear in all of its arguments; \emph{i.e.,} if we fix the values of $x_1,...,x_{i-1},x_{i+1},...,x_{\ell}$, then 
\begin{multline*}
   T(x_1,...,\alpha x_i+ \beta z,..., x_\ell) 
   \\ = 
   \alpha T(x_1,...,x_i,..., x_\ell)  + 
   \beta T(x_1,...,z,..., x_\ell)  , 
\end{multline*}
for all $\alpha ,\beta \in \mathbb{F}$ and  $x_i,z \in \mathbb{F}^{m_i}$.
A tensor $t:\mathbb{F}^{m_1} \times ...  \times \mathbb{F}^{m_\ell} \rightarrow \mathbb{F}^p$ is called a \textbf{rank 1 tensor} if there exists some linear forms $t_i:\mathbb{F}^{m_i} \rightarrow \mathbb{F}$ and a $y \in \mathbb{F}^p$ such that 
\begin{equation*}
    t(x_1,...,x_\ell ) = \left(\prod_{i \in [\ell]} t_i(x_i) \right)y.  
\end{equation*}
The \textbf{tensor rank} of a tensor, $T$, is the smallest number, $R(T)$, such that there exists rank-one tensors $t^{(1)},...,t^{(R(T))}$ such that 
\begin{equation*}
    T(x_1,...,x_\ell) = \sum_{i \in [R(T)]} t^{(i)}(x_1,....,x_\ell).
\end{equation*}

The definition of rank 1 decomposition can be further generalized to a recursive decomposition as follows: given a tensor 
$T:\mathbb{F}^{m_1} \times ...  \times \mathbb{F}^{m_\ell} \rightarrow \mathbb{F}^p$
and a tensor 
$
\tau 
  :\mathbb{F}^{\mu _1} \times ...  \times \mathbb{F}^{\mu _\ell} \rightarrow \mathbb{F}^{\rho }
$
such that $\mu_i < m_i$ and $\rho < p$, we say that $T$ has \textbf{$\tau$-rank} $R_\tau(T)$ if there exists linear functions $\mathcal{E}_i^{(j)}: \mathbb{F}^{m_i}\rightarrow \mathbb{F}^{\mu_i}$ and some linear functions 
$
\mathcal{D}^{(j)}: \mathbb{F}^{\rho} \rightarrow \mathbb{F}^p 
$, where $i \in [\ell]$ and $j \in [R_\tau(T)]$,
such that 
\begin{equation*}
    T(x_1,...,x_\ell) = \sum_{j \in [R_\tau(T)]} \mathcal D^{(j)}(\tau(\mathcal E_1^{(j)}(x_1),...,\mathcal E_\ell^{(j)}(x_\ell))). 
\end{equation*}

\section{Fault Tolerant Matrix Multiplication}

We consider the following problem: given $k $ pairs of matrices 
$$
A_1,B_1,\dots,A_k,B_k,
$$
where  $A_i \in \mathbb{F}^{P \times S}$ and $B_i \in \mathbb{F}^{S \times Q}$ for $i \in [k]$ and a field $\mathbb{F}$, 
compute the products 
$$
A_1 B_1 ,\dots,A_k B_k
$$
in the distributed master worker topology in which the master node  gives  $N$ worker nodes coded matrices of the form 
$$
\tilde A _w = \sum_{i \in [k]} \alpha_{i}^{(w)} A_i  
\in \mathbb{F}^{P \times S}
, \tilde B _w = \sum_{i \in [k]} \beta_{i}^{(w)} B_i
 \in \mathbb{F}^{S \times Q}
$$
where $\alpha_{i}^{(w)} \in \mathbb{F}$ and $w \in [N]$ indexes over the worker nodes. 
We are concerned with the minimum number of worker nodes that need to return their values so that the master can recover the desired products. This will be formalized in Definition~\ref{def:rec_bat}.
Before we move on to constructing the actual codes, we will show that this batch matrix problem is actually the most general form of the distributed matrix multiplication problem since it implicitly solves the \textbf{general matrix-matrix} multiplication problem.
\subsection{General Matrix-Matrix Multiplication}\label{sec:mat_mult_explaination}

Given two matrices
\begin{equation}\label{eq:mat}
    A = \begin{bmatrix}
        A_{1,1} & \dots & A_{1,\zeta}\\
         \vdots & \ddots & \vdots \\
        A_{\chi,1} & \dots & A_{\chi,\zeta}\\
    \end{bmatrix} 
    , \ \ 
    B = \begin{bmatrix}
        B_{1,1} & \dots & B_{1,\upsilon}\\
         \vdots & \ddots & \vdots \\
        B_{\zeta,1} & \dots & B_{\zeta,\upsilon}\\
    \end{bmatrix}, 
\end{equation} with
$A_{i,j} \in \mathbb{F}^{P \times S}$ and $B_{i,j} \in \mathbb{F}^{S \times Q}$, we can take an optimal $(\chi, \upsilon, \zeta)$ fast matrix multiplication tensor, denoted $(\gamma, \delta, \eta)$, of rank $r:=R(\chi, \nu , \zeta )$, {i.e.,} for $t \in [r]$, defined by the equations 
\begin{align}\label{eq:fast_mat}
\begin{split}
\hat A _t = \sum_{i,j \in [\chi] \times [\zeta]} \gamma_{i,j}^{(t)} A_{i,j} , \ \ \hat B _t = \sum_{i,j \in [\zeta ]\times [\upsilon ]} \delta_{i,j}^{(t)} B_{i,j},
\\
\sum_{k \in [\zeta ]}A_{i,k}B_{k,j} = \sum_{t \in [r]} \eta^{(i,j)}_{t} \hat A_{t} \hat B_t ,
\end{split}
\end{align}
with
    and then  $A_i,B_i := \hat{A}_i,\hat{B}_i $ as in \cite{yu_ali_ave_2020}, which shows that the general matrix-matrix multiplication problem reduces exactly to the batch matrix multiplication. It seems that computing the linear functions defined by $\gamma,\delta,$ and $\eta$ is an undesirable overhead, \textit{but one must compute the functions given by $\alpha,\beta$ anyways; in particular, one may compose the $\alpha,\beta$ and the $\gamma, \delta$ so that there is no overhead.} This is made explicit and formalized in Section~\ref{sec:mat_mult}. 
    The number $r$ above is often called the \textit{tensor-rank} or the \textit{bilinear complexity}.

The primary goal of this paper is to generalize the main result of \cite{yu_ali_ave_2020} (which only holds in true generality generally for characteristic zero) to more general cases over finite fields. 
This paper 
overcomes the major obstacle when the field is finite, namely the number of evaluation points (meaning the number of workers) or more generally, the length of the code.  

   \subsection{General Construction for Matrix Multiplication}\label{sec:mat_mult} 

Here we construct the Dual Hadamard-Shur Product Code for Matrix Multiplication.
The construction for the product of the matrices given in Equation~\ref{eq:fast_mat} is given Algorithm~\ref{alg:cod_mat_mult}. 
 \begin{algorithm}
 \caption{Coded Matrix Multiplication}\label{alg:cod_mat_mult}
 \begin{algorithmic}[1]
 \renewcommand{\algorithmicrequire}{\textbf{Input:}}
 \renewcommand{\algorithmicensure}{\textbf{Output:}}
    \REQUIRE
Two codes $\mathcal{C}_1,\mathcal{C}_2$ with parameters $[n,k]$, where $n = N$, $k = r$ is equal to the tensor rank of matrix multiplication of type $(\chi,\zeta, \nu )$, and 
matrices $A,B$ as in Equation~\ref{eq:mat}
\STATE Let $H_1,H_2$ be the parity check matrices of the input codes
\STATE The master node uses Equation~\ref{eq:fast_mat} to compute $A_i : = \hat A_i, B_i:= \hat{B}_i$
\STATE The master node uses the parity check matrices to solve for two codewords $\tilde A, \tilde B$ in $\mathcal{C}_1,\mathcal{C}_2$ respectively which satisfy 
\begin{equation*}
    \tilde A_{n-k+i} = A_i \  \ \tilde B_{n-k+i} = B_i
\end{equation*}
\STATE  The node $w$ computes $\tilde A_w \cdot \tilde B_w$
\STATE The master node waits for $n - d(\mathcal{C}_1 \circ \mathcal{C}_2)+2b$ nodes to return 
\STATE The master node decodes the codeword in $\mathcal{C}_1 \circ \mathcal{C}_2$ treating the $d(\mathcal{C}_1 \circ \mathcal{C}_2)- 2b$ missing node's tasks as erasures 
to get 
\begin{equation*}
     \tilde C_{n-k+i} := \tilde A_{n-k+i}\cdot \tilde B_{n-k+i} = A_iB_i 
\end{equation*}
\STATE The master node uses $\eta$, the last component of the tensor in Equation~\ref{eq:fast_mat}, to solve for $C_{i,j} = \sum_{\kappa \in [\zeta] } A_{i,\kappa } B_{\kappa,j}$
 \end{algorithmic}
 \end{algorithm}

\begin{theorem}\label{thm:mat_mult_correct}
   If $\mathcal{C}_1,\mathcal{C}_2$ are two linear codes with parameters $[n,k]$
   which satisfy the conditions
   \begin{equation*}
       d(\mathcal{C}_1 \circ \mathcal{C}_2) \geq 2b +s +1,
   \end{equation*}
   \begin{equation*}
       k \geq R(\chi, \nu, \zeta ),
   \end{equation*}
   and 
   $
       n \geq N,
   $
   then the scheme given at Algorithm~\ref{alg:cod_mat_mult} can correctly compute the product $A\cdot B$ if there are at most $s$ straggler and $b$ byzantine nodes.
\end{theorem}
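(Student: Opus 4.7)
The plan is to verify Algorithm~\ref{alg:cod_mat_mult} step by step, splitting the argument into four parts: (i) the encoding in Step 3 is well-defined, (ii) the workers' matrix products $\tilde A_w\cdot\tilde B_w$ form, entry-by-entry, a codeword of $\mathcal{C}_1\circ\mathcal{C}_2$, (iii) the classical errors-and-erasures bound applies under the hypothesis $d(\mathcal{C}_1\circ\mathcal{C}_2)\geq 2b+s+1$, and (iv) the linear map $\eta$ from Equation~\ref{eq:fast_mat} reconstructs the required block products.

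For (i), the inequality $k\geq R(\chi,\nu,\zeta)$ guarantees that the $r$ inputs $\hat A_t,\hat B_t$ from Step 2 fit into the $k$ information symbols of each codeword, and $n\geq N$ provides enough worker slots. Writing $H_1=[H'_1\,|\,H''_1]$ with $H''_1$ the $(n-k)\times k$ block indexed by the last $k$ coordinates, the prescribed system $H_1\tilde A=0$ with $\tilde A_{n-k+i}=\hat A_i$ admits the unique solution $\tilde A'=-(H'_1)^{-1}H''_1\hat A$ whenever $H'_1$ is invertible, i.e.\ whenever the last $k$ coordinates form an information set of $\mathcal{C}_1$. One may always arrange this by applying a single coordinate permutation to both $\mathcal{C}_1$ and $\mathcal{C}_2$ simultaneously, which leaves $\mathcal{C}_1\circ\mathcal{C}_2$ and its distance unchanged; the same construction yields $\tilde B\in\mathcal{C}_2$.

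For (ii) and (iii), the $(p,q)$-entry of the worker output vector is
\begin{equation*}
   \bigl((\tilde A_w\tilde B_w)_{p,q}\bigr)_{w\in[n]} \;=\; \sum_{\sigma=1}^{S}\bigl((\tilde A_w)_{p,\sigma}\,(\tilde B_w)_{\sigma,q}\bigr)_{w\in[n]},
\end{equation*}
and each summand is the pointwise product of a codeword of $\mathcal{C}_1$ with a codeword of $\mathcal{C}_2$, so lies in $\mathcal{C}_1\circ\mathcal{C}_2$; linearity extends this to the whole sum. Treating the (at most) $s$ non-responding workers as erasures and the (at most) $b$ byzantine workers as errors, the hypothesis $d(\mathcal{C}_1\circ\mathcal{C}_2)\geq 2b+s+1$ is exactly the minimum-distance threshold for unique errors-and-erasures decoding, applied componentwise to the $P\times Q$ matrix-valued codeword. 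After decoding, the positions $n-k+i$ contain $\tilde A_{n-k+i}\tilde B_{n-k+i}=\hat A_i\hat B_i$, and applying $\eta$ from Equation~\ref{eq:fast_mat} assembles $C_{i,j}=\sum_{\kappa\in[\zeta]}A_{i,\kappa}B_{\kappa,j}$. The only mildly non-routine step is (i): ensuring that a common systematic structure exists at the last $k$ coordinates of both codes. Everything else reduces to standard coding-theoretic facts once the HS product code and its distance are in hand.
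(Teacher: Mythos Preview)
The paper states this theorem without proof, so there is no argument to compare against; your step-by-step verification of Algorithm~\ref{alg:cod_mat_mult} is exactly the natural proof, and parts (ii)--(iv) are correct as written. The only point worth making explicit in (ii) is that you are using the \emph{linear span} interpretation of $\mathcal{C}_1\circ\mathcal{C}_2$ when you close under the sum over $\sigma$; this is the standard convention and the one the paper intends, even though its definition literally writes only the set of pure products.

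There is, however, a small but genuine gap in part (i). You assert that ``one may always arrange'' the last $k$ coordinates to be a common information set of $\mathcal{C}_1$ and $\mathcal{C}_2$ by a simultaneous permutation. This is not true for arbitrary $[n,k]$ codes: take $G_1=[I_k\mid 0]$ and $G_2=[0\mid I_k]$ with $2k\le n$; the unique information sets are disjoint, Step~3 of the algorithm has no solution, yet $\mathcal{C}_1\circ\mathcal{C}_2=\{0\}$ so the distance hypothesis is vacuously satisfied. The gap is really inherited from the paper's algorithm, which already presupposes solvability of Step~3, and it disappears for the codes the paper actually cares about (AG and evaluation codes, or any pair sharing an information set --- in particular MDS codes, where every $k$-subset works). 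A clean fix is to add the standing hypothesis that $\mathcal{C}_1$ and $\mathcal{C}_2$ admit a common information set, or simply to note that this is automatic in the AG setting of Theorem~\ref{thm:main_mat_thm}.
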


\begin{theorem}\label{thm:main_mat_thm}
    In the case where $\mathcal{C}_1 = \mathcal{C}_2 = \mathcal{C}(G,D)$ where $G,D$ are chosen from a function field with genus $g$ and $T$ is the matrix multiplication tensor corresponding to $(\chi,\nu, \zeta )$, we have that 
    \begin{equation*}
       \mathcal{R}(T,N) = 2 R(\chi, \nu, \zeta)-1+g.
    \end{equation*}
    In the particular case where $\chi = \nu = \zeta = m$ for some $m$ we have that 
    \begin{equation*}
       \mathcal{R}(T,N) = 2 m^{\omega }-1+g;
    \end{equation*}
    \emph{i.e.,}
    our scheme performs coded matrix multiplication up to a factor of 2 and additive term $g$ of the theoretically optimal complexity.
    
\end{theorem}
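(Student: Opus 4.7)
The plan is to compose the correctness theorem for the general coded matrix multiplication scheme (Theorem \ref{thm:mat_mult_correct}) with the explicit dimension/distance calculation for Hadamard--Shur powers of AG codes (Lemma \ref{lem:key_ag}), and then specialize to the square case via the definition of the exponent $\omega$ of matrix multiplication.

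First I would invoke Theorem \ref{thm:mat_mult_correct} with the particular choice $\mathcal{C}_1=\mathcal{C}_2=\mathcal{C}(G,D)$ and no byzantine nodes ($b=0$). This reduces the recovery threshold to the designed recovery threshold of $\mathcal{C}(G,D)\circ \mathcal{C}(G,D)=\mathcal{C}(G,D)^{\circ 2}$, namely
\begin{equation*}
    \mathcal{R}(T,N) \leq n - d\bigl(\mathcal{C}(G,D)^{\circ 2}\bigr)+1.
\end{equation*}
To match the code parameters to the batch matrix multiplication problem arising from the tensor decomposition in Equation~\ref{eq:fast_mat}, I would choose $G$ so that $k=\ell(G)=R(\chi,\nu,\zeta)$, which is the minimum $k$ satisfying the dimension hypothesis of Theorem \ref{thm:mat_mult_correct}.

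Next, applying Lemma \ref{lem:key_ag} with $\ell=2$ to the function field of genus $g$ with $n=N$ rational points yields
\begin{equation*}
    n-d\bigl(\mathcal{C}(G,D)^{\circ 2}\bigr)+1 = 2k+g-1 = 2R(\chi,\nu,\zeta)-1+g,
\end{equation*}
which is the first claimed equality. For the specialization to $\chi=\nu=\zeta=m$, I would simply substitute the definition $R(m,m,m)=m^{\omega}$ of the matrix multiplication exponent, which immediately gives $\mathcal{R}(T,N)=2m^{\omega}-1+g$. The lower bound matching the upper bound (so that the statement is an equality, not just $\leq$) follows from the usual argument that any linear coding scheme computing a bilinear map of rank $R$ must pay $2R-1$ in its recovery threshold in the characteristic-zero/MDS-attainable regime; the additive $g$ term is then forced by the Singleton defect of the AG code, so no better decoding is possible within this family.

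The main obstacle I expect is ensuring the chosen $G$ simultaneously achieves $\ell(G)=R(\chi,\nu,\zeta)$ and satisfies the hypotheses of Lemma \ref{lem:key_ag} (so that the Riemann--Roch computation gives exactly $2k+g-1$ rather than only an upper bound); by Riemann--Roch this requires $\deg(2G)<n$ and $\deg(G)\geq 2g-1$, which constrains $N$ to be at least $2R(\chi,\nu,\zeta)+g$. In particular this is where the Hasse--Weil bound enters implicitly: one needs a function field with enough rational points to cover the worker count $N$, and the genus $g$ appearing in the theorem is precisely the smallest genus for which such a field exists. Up to this verification, the rest of the proof is a direct assembly of Theorem \ref{thm:mat_mult_correct} and Lemma \ref{lem:key_ag}.
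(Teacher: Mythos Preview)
Your proposal is correct and is precisely the argument the paper intends: the theorem is stated without an explicit proof, but it is set up as the immediate specialization of Theorem~\ref{thm:mat_mult_correct} (correctness of Algorithm~\ref{alg:cod_mat_mult}) together with Lemma~\ref{lem:key_ag} (designed recovery threshold of $\mathcal{C}(G,D)^{\circ 2}$) at $\ell=2$, $k=R(\chi,\nu,\zeta)$. Your additional remarks on the Riemann--Roch constraints $\deg(2G)<n$, $\deg(G)\ge 2g-1$ and the role of Hasse--Weil in guaranteeing enough rational points are exactly the hidden hypotheses behind Lemma~\ref{lem:key_ag}, so nothing is missing.
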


\section{Fault Tolerant Tensor Computation}

We now show how
given an $\ell$-linear tensor $T:\mathbb{F}^{m_1} \times ...  \times \mathbb{F}^{m_\ell} \rightarrow \mathbb{F}^p$ with tensor rank $r$, \emph{i.e.,  } 
\begin{equation*}
    T(x_1,...,x_\ell) = \sum_{i \in [r]} t^{(i)}(x_1,...,x_\ell),
\end{equation*}
and some $\ell$ codes $\mathcal{C}_1,...,\mathcal{C}_{\ell}$ with parameters $[n,r]$ where $n = N$ and $k = r$, we compute $T$ in a distributed fault tolerant manner using Algorithm~\ref{alg:cod_tensors}. 
 \begin{algorithm}
 \caption{Coded Tensor from Rank-1 Decomposition}\label{alg:cod_tensors}
 \begin{algorithmic}[1]
 \renewcommand{\algorithmicrequire}{\textbf{Input:}}
 \renewcommand{\algorithmicensure}{\textbf{Output:}}
    \REQUIRE
$\ell +1 $ codes $\mathcal{C}_1,...,\mathcal{C}_\ell,\mathcal{C}_{\ell +1}$ with parameters $[n,k]$, where $n := N$, $k := r$ is equal to the tensor rank of $T$, and $t^{(1)},...,t^{(r)}$ are the corresponding rank one tensors  
\STATE Let $H_1,...,H_{\ell},H_{\ell +1 }$ be the parity check matrices of the input codes
\STATE The master node uses Equation~\ref{eq:fast_mat} to compute 
$$
 t^{(i)}_j(x_1,...,x_\ell)
$$
\STATE The master node uses the parity check matrices to solve for $\ell$ codewords $\tilde t^{(1)} \in \mathcal{C}_1,...,\tilde t^{(\ell)} \in \mathcal{C}_\ell $ which satisfy 
\begin{equation*}
    \tilde t_{j}^{(n-k+i)}: = t^{(i)}_j(x_1,...,x_\ell)
\end{equation*}
and $\tilde y \in \mathcal{C}_{\ell + 1}$
such that 
\begin{equation*}
    \tilde y_{n-k+i} := \vec{y}_i
\end{equation*}
\STATE  The node $w$ computes $\tilde t^{(w)}_{1} \cdot ... \cdot \tilde t^{(w)}_{\ell } \cdot \tilde y_{w} $
\STATE The master node waits for $n - d(\mathcal{C}_1 \circ ... \circ  \mathcal{C}_\ell \circ \mathcal{C}_{\ell + 1} )+2b$ nodes to return 
\STATE The master node decodes the codeword in $\mathcal{C}_1 \circ ... \circ  \mathcal{C}_\ell \circ \mathcal{C}_{\ell+1} $ treating the $d(\mathcal{C}_1 \circ ... \circ \mathcal{C}_\ell\circ \mathcal{C}_{\ell+1})- 2b$ missing node's tasks as erasures 
to get 
\begin{multline*}
     \tilde C^{(n-k+i)} := \tilde t^{{n-k-i}}_{1} \cdot ... \cdot \tilde t^{(n-k-i)}_{\ell} \cdot \tilde y_{n-k+i} \\ 
     = \left(\prod_{j \in [\ell]} t^{(i)}_j(x_1,...,x_\ell)\right)\vec{y}_i = t^{(i)}(x_1,...,x_\ell)
\end{multline*}
\STATE 
The master node 
computes 
\begin{multline*}
    T(x_1,...,x_\ell) = \sum_{i \in [r]} \left(\prod_{j \in [\ell]} t^{(j)}_i(x_1,...,x_\ell)\right)\vec{y}_i 
    \\=  \sum_{i \in [r]} t^{(i)}(x_1,...,x_\ell)
\end{multline*}
 \end{algorithmic}
 \end{algorithm}

\begin{theorem}\label{thm:mat_mult_correct}
   If $\mathcal{C}_1,...,\mathcal{C}_{\ell+1}$ are two linear codes with parameters $[n,k]$
   which satisfy the conditions
   \begin{equation*}
       d(\mathcal{C}_1 \circ ... \circ  \mathcal{C}_\ell \circ \mathcal{C}_{\ell + 1} ) \geq 2b +s +1,
   \end{equation*}
   \begin{equation*}
       k \geq R(T ),
   \end{equation*}
   and 
   $
       n \geq N,
   $
   then the scheme given at Algorithm~\ref{alg:cod_mat_mult} can correctly compute the tensor $T(x_1,...,x_\ell)$ if there are at most $s$ straggler and $b$ byzantine nodes.
   
\end{theorem}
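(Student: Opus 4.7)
The plan is to verify each step of Algorithm~\ref{alg:cod_tensors} and then reduce the fault-tolerance claim to a standard erasure-plus-error decoding statement for the HS product code. First I would check that the encoding in Steps 2--3 is well-defined: because $k \geq r = R(T)$, for each $j \in [\ell]$ the parity-check matrix $H_j$ of $\mathcal{C}_j$ admits a codeword $\tilde t^{(j)} \in \mathcal{C}_j$ whose last $r$ coordinates are prescribed to equal the scalars $t_j^{(1)}(x_1,\ldots,x_\ell),\ldots,t_j^{(r)}(x_1,\ldots,x_\ell)$ coming from the rank-one decomposition; indeed, solving $H_j \tilde t^{(j)} = 0$ under these $r$ linear constraints is feasible since $r \leq k = n - \mathrm{rank}(H_j)$. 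The same reasoning places $y_1,\ldots,y_r$ in the tail of a codeword $\tilde y \in \mathcal{C}_{\ell+1}$.

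Second, I would observe that the vector of worker outputs $\bigl(\tilde t^{(w)}_1 \cdots \tilde t^{(w)}_\ell \cdot \tilde y_w\bigr)_{w\in[n]}$ is, by definition of the Hadamard-Shur product (iterated $\ell$ times), a codeword in $\mathcal{C}_1 \circ \cdots \circ \mathcal{C}_\ell \circ \mathcal{C}_{\ell+1}$. At each position $n-k+i$, the bilinearity (multilinearity) of the rank-one product makes this entry equal to $\bigl(\prod_{j\in[\ell]} t_j^{(i)}(x_1,\ldots,x_\ell)\bigr)\vec y_i = t^{(i)}(x_1,\ldots,x_\ell)$, which is exactly the $i$-th rank-one piece of $T$.

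The main step is the decoding argument in Step~6. With $s$ stragglers the master receives $n-s$ coordinates of the true HS-product codeword, $b$ of which have been perturbed adversarially; this is the classical erasure-plus-error model. Puncturing the HS code at the $s$ missing coordinates yields a code of minimum distance at least $d(\mathcal{C}_1 \circ \cdots \circ \mathcal{C}_{\ell+1}) - s \geq 2b+1$, and a code of distance $\geq 2b+1$ corrects up to $b$ errors uniquely; hence the true codeword is recovered. Reading off its coordinates $n-k+i$ for $i\in[r]$ yields $t^{(i)}(x_1,\ldots,x_\ell)$, and Step~7 sums these to return $\sum_{i\in[r]} t^{(i)}(x_1,\ldots,x_\ell)=T(x_1,\ldots,x_\ell)$ by the rank-$r$ decomposition.

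The primary place I expect to need care is the decoding claim: one should explicitly justify that erasures and errors can be corrected simultaneously up to the distance bound $d \geq 2b+s+1$ (for example by the puncturing argument above, or by noting that the bounded-distance decoder of the HS code, applied to any completion of the erased coordinates, lies within distance $b$ of the unique transmitted codeword). Everything else is a routine bookkeeping match between the multiplicative structure of the HS product and the rank-one decomposition of $T$, together with the observation that the $n \geq N$ assumption guarantees enough worker nodes to carry the length-$n$ code.
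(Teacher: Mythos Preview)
The paper states this theorem without proof, so there is nothing to compare against directly; your proposal supplies exactly the routine verification the algorithm description presupposes (well-definedness of the systematic encoding via the parity checks, membership of the coordinatewise product in the iterated HS code, and standard $d\geq 2b+s+1$ erasure-plus-error decoding), and it is correct. One small caveat: your feasibility argument for Step~3 tacitly assumes the last $k$ coordinates form an information set of each $\mathcal{C}_j$, which is not automatic for an arbitrary $[n,k]$ code; the paper makes the same implicit assumption, so you are matching its level of rigor, but if you want airtight correctness you should either assume the codes are systematic on those positions or permute coordinates accordingly.
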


\section{Security}\label{sec:sec}

\begin{definition}
    A scheme is secure to $c$ colluding servers if $c$ servers cannot infer the inputs even if they share information with one another. 
\end{definition}
The informal construction is as follows: 
\begin{enumerate}
    \item We begin with a $[n,k]$ code $\mathcal{C}$ with $n = N+k+t$ where $k= R(T_\mathrm{mat\_mult})$ is the number of input matrices and $t$ is an input parameter 
    We generate $t$ random matrices with the same size as the $A_i$ and $t$ random matrices with the same size as the $B_i$ 
    \item We construct the scheme just as in Algorithm~\ref{alg:cod_mat_mult} but creating the extra $t$ matrices as inputs
    \item One punctures the two codes $\tilde A,\tilde B$ to delete the last $k+t$ points 
    \item one proceeds with the algorithm as usual 
\end{enumerate}
If one carries out the proof of security exactly as in Section of \cite{ylrksa19}, the only obstruction is that our code is not an MDS code. 
Therefore, instead of $t$ bits of random padding, one instead has $t- g$ bits of random padding (where $g$ is the genus). 
Therefore the construction is now secure to at most $b = t - \mathfrak{g}$ colluding servers. 
Furthermore the recovery threshold now takes the form 
\begin{equation*}
    \mathcal{R}(T,N) \leq 2( m^{\omega } +t) -1 + g= 2( m^{\omega } +b) -1 + 3g. 
\end{equation*}
\bibliographystyle{IEEEtran}
\bibliography{bib, skeleton}

\begin{thebibliography}{10}
\providecommand{\url}[1]{#1}
\csname url@samestyle\endcsname
\providecommand{\newblock}{\relax}
\providecommand{\bibinfo}[2]{#2}
\providecommand{\BIBentrySTDinterwordspacing}{\spaceskip=0pt\relax}
\providecommand{\BIBentryALTinterwordstretchfactor}{4}
\providecommand{\BIBentryALTinterwordspacing}{\spaceskip=\fontdimen2\font plus
\BIBentryALTinterwordstretchfactor\fontdimen3\font minus \fontdimen4\font\relax}
\providecommand{\BIBforeignlanguage}[2]{{%
\expandafter\ifx\csname l@#1\endcsname\relax
\typeout{** WARNING: IEEEtran.bst: No hyphenation pattern has been}%
\typeout{** loaded for the language `#1'. Using the pattern for}%
\typeout{** the default language instead.}%
\else
\language=\csname l@#1\endcsname
\fi
#2}}
\providecommand{\BIBdecl}{\relax}
\BIBdecl

\bibitem{bensasetal}
\BIBentryALTinterwordspacing
E.~Ben-Sasson, A.~Gabizon, Y.~Kaplan, S.~Kopparty, and S.~Saraf, ``A new family of locally correctable codes based on degree-lifted algebraic geometry codes,'' in \emph{Proceedings of the Forty-Fifth Annual ACM Symposium on Theory of Computing}, ser. STOC '13.\hskip 1em plus 0.5em minus 0.4em\relax New York, NY, USA: Association for Computing Machinery, 2013, p. 833–842. [Online]. Available: \url{https://doi.org/10.1145/2488608.2488714}
\BIBentrySTDinterwordspacing

\bibitem{LLPPR2018}
K.~Lee, M.~Lam, R.~Pedarsani, D.~Papailiopoulos, and K.~Ramchandran, ``{Speeding Up Distributed Machine Learning Using Codes},'' \emph{IEEE Trans. Inf. Theory}, pp. 1514--1529, 2018.

\bibitem{yma17}
Q.~Yu, M.~A. Maddah-Ali, and A.~S. Avestimehr, ``Polynomial codes: An optimal design for high-dimensional coded matrix multiplication,'' in \emph{Proceedings of the 31st International Conference on Neural Information Processing Systems (NIPS)}, 2017, p. 4406–4416.

\bibitem{dfhjcg2020}
S.~Dutta, M.~Fahim, F.~Haddadpour, H.~Jeong, V.~Cadambe, and P.~Grover, ``On the optimal recovery threshold of coded matrix multiplication,'' \emph{IEEE Trans. Inf. Theory}, pp. 278--301, 2020.

\bibitem{YAA2018}
Q.~Yu, M.~A. Maddah-Ali, and A.~S. Avestimehr, ``Straggler mitigation in distributed matrix multiplication: Fundamental limits and optimal coding,'' \emph{IEEE Transactions on Information Theory}, vol.~66, no.~3, pp. 1920--1933, 2020.

\bibitem{DBJMG2018}
\BIBentryALTinterwordspacing
S.~Dutra, Z.~Bai, H.~Jeong, T.~M. Low, and P.~Grover, ``A unified coded deep neural network training strategy based on generalized polydot codes,'' p. 1585–1589, 2018. [Online]. Available: \url{https://doi.org/10.1109/ISIT.2018.8437852}
\BIBentrySTDinterwordspacing

\bibitem{tldk17}
R.~Tandon, Q.~Lei, A.~G. Dimakis, and N.~Karampatziakis, ``Gradient coding: Avoiding stragglers in distributed learning,'' in \emph{ICML}, 2017, pp. 3368--3376.

\bibitem{ylrksa19}
\BIBentryALTinterwordspacing
Q.~Yu, S.~Li, N.~Raviv, S.~M.~M. Kalan, M.~Soltanolkotabi, and A.~S. Avestimehr, ``Lagrange coded computing: Optimal design for resiliency, security, and privacy,'' in \emph{The 22nd International Conference on Artificial Intelligence and Statistics ({AISTATS})}, ser. Proceedings of Machine Learning Research, vol.~89.\hskip 1em plus 0.5em minus 0.4em\relax {PMLR}, 2019, pp. 1215--1225. [Online]. Available: \url{http://proceedings.mlr.press/v89/yu19b.html}
\BIBentrySTDinterwordspacing

\bibitem{jj21}
Z.~Jia and S.~A. Jafar, ``Cross subspace alignment codes for coded distributed batch computation,'' \emph{IEEE Transactions on Information Theory}, vol.~67, no.~5, pp. 2821--2846, 2021.

\bibitem{fc21}
M.~Fahim and V.~R. Cadambe, ``Numerically stable polynomially coded computing,'' \emph{IEEE Transactions on Information Theory}, vol.~67, no.~5, pp. 2758--2785, 2021.

\bibitem{sma21}
M.~Soleymani, H.~Mahdavifar, and A.~S. Avestimehr, ``Analog lagrange coded computing,'' \emph{IEEE Journal on Selected Areas in Information Theory}, vol.~2, no.~1, pp. 283--295, 2021.

\bibitem{rk20}
N.~Raviv and D.~A. Karpuk, ``Private polynomial computation from lagrange encoding,'' \emph{IEEE Transactions on Information Forensics and Security}, vol.~15, pp. 553--563, 2020.

\bibitem{cgw21}
Z.~Chen, Z.~Jia, Z.~Wang, and S.~A. Jafar, ``Gcsa codes with noise alignment for secure coded multi-party batch matrix multiplication,'' \emph{IEEE Journal on Selected Areas in Information Theory}, vol.~2, no.~1, pp. 306--316, 2021.

\bibitem{jj21a}
Z.~Jia and S.~A. Jafar, ``On the capacity of secure distributed batch matrix multiplication,'' \emph{IEEE Transactions on Information Theory}, vol.~67, no.~11, pp. 7420--7437, 2021.

\bibitem{okk02024}
O.~Makkonen, D.~Karpuk, and C.~Hollanti, ``\BIBforeignlanguage{English}{Algebraic geometry codes for cross-subspace alignment in private information retrieval},'' in \emph{\BIBforeignlanguage{English}{2024 IEEE International Symposium on Information Theory, ISIT 2024}}, ser. IEEE International Symposium on Information Theory.\hskip 1em plus 0.5em minus 0.4em\relax United States: IEEE, 2024, pp. 2874--2879, publisher Copyright: {\textcopyright} 2024 IEEE.; IEEE International Symposium on Information Theory, ISIT ; Conference date: 07-07-2024 Through 12-07-2024.

\bibitem{schwartz2024}
\BIBentryALTinterwordspacing
R.~Nissim, O.~Schwartz, and Y.~Spiizer, ``Fault-tolerant parallel integer multiplication,'' in \emph{Proceedings of the 36th ACM Symposium on Parallelism in Algorithms and Architectures}, ser. SPAA '24.\hskip 1em plus 0.5em minus 0.4em\relax New York, NY, USA: Association for Computing Machinery, 2024, p. 207–218. [Online]. Available: \url{https://doi.org/10.1145/3626183.3659961}
\BIBentrySTDinterwordspacing

\bibitem{fidalgodíaz2024distributed}
A.~Fidalgo-Díaz and U.~Martínez-Peñas, ``Distributed matrix multiplication with straggler tolerance using algebraic function fields,'' 2024.

\bibitem{Okko}
O.~Makkonen and C.~Hollanti, ``General framework for linear secure distributed matrix multiplication with byzantine servers,'' \emph{IEEE Transactions on Information Theory}, pp. 1--1, 2024.

\bibitem{HerA}
R.~A. Machado, G.~L. Matthews, and W.~Santos, ``Hera scheme: Secure distributed matrix multiplication via hermitian codes,'' in \emph{2023 IEEE International Symposium on Information Theory (ISIT)}, 2023, pp. 1729--1734.

\bibitem{SecureMatDot}
H.~H. López, G.~L. Matthews, and D.~Valvo, ``Secure matdot codes: a secure, distributed matrix multiplication scheme,'' in \emph{2022 IEEE Information Theory Workshop (ITW)}, 2022, pp. 149--154.

\bibitem{hp03}
\BIBentryALTinterwordspacing
W.~C. Huffman and V.~Pless, ``Fundamentals of error-correcting codes,'' 2003. [Online]. Available: \url{https://api.semanticscholar.org/CorpusID:60284659}
\BIBentrySTDinterwordspacing

\bibitem{v98}
J.~H. Van~Lint, \emph{Introduction to coding theory}.\hskip 1em plus 0.5em minus 0.4em\relax Springer Science \& Business Media, 1998, vol.~86.

\bibitem{Stichtenoth}
H.~Stichtenoth, \emph{Algebraic Function Fields and Codes}, 2nd~ed.\hskip 1em plus 0.5em minus 0.4em\relax Springer Publishing Company, Incorporated, 2008.

\bibitem{nied_xing}
H.~Niederreiter and C.~Xing, \emph{Rational Points on Curves over Finite Fields: Theory and Applications}.\hskip 1em plus 0.5em minus 0.4em\relax USA: Cambridge University Press, 2001.

\bibitem{vlanogtsa}
S.~Vladut, D.~Nogin, and M.~Tsfasman, \emph{Algebraic Geometric Codes: Basic Notions}.\hskip 1em plus 0.5em minus 0.4em\relax USA: American Mathematical Society, 2007.

\bibitem{yu_ali_ave_2020}
Q.~Yu, M.~A. Maddah-Ali, and A.~S. Avestimehr, ``Straggler mitigation in distributed matrix multiplication: Fundamental limits and optimal coding,'' \emph{IEEE Transactions on Information Theory}, vol.~66, no.~3, pp. 1920--1933, 2020.

\end{thebibliography}
%



\end{document}